\journalname{}
\newcommand{\PP}{{\mathbb P}}
\newcommand{\cL}{{\mathcal L}}
\newcommand{\cT}{{\mathcal T}}
\begin{document}

\title{A `stochastic safety radius' for distance-based tree reconstruction
}



\author{Olivier Gascuel        \and
        Mike Steel 
}


\institute{M. Steel  (corresponding author) \at
              Biomathematics Research Centre, University of Canterbury, Christchurch, New Zealand \\
              Tel.: +64-3-364-2987\\
              Fax:  +64-3-364-2587\\
              \email{mike.steel@canterbury.ac.nz}           
           \and
           O. Gascuel \at
              Institut de Biologie Computationnelle,
LIRMM, UMR 5506 CNRS \& Universit{\'e} de Montpellier, France \\
							\email{gascuel@lirmm.fr}  
}

\date{Received: date / Accepted: date}

\maketitle

\begin{abstract}
A variety of algorithms have been proposed for reconstructing trees that show the evolutionary relationships between species by comparing
differences in genetic data across present-day taxa.  If the leaf-to-leaf distances in a tree can be accurately estimated, then it is possible to
reconstruct this tree from these estimated distances, using polynomial-time methods such as the popular `Neighbor-Joining' algorithm.
There is a precise combinatorial condition under which distance-based methods are guaranteed to return a correct tree (in full or in part)  based on the requirement that the input distances
all lie within some `safety radius' of the true distances. Here, we explore a stochastic analogue of this  condition, and mathematically establish  upper
and lower bounds on this `stochastic safety radius' for distance-based tree reconstruction methods. Using simulations, we show how this notion provides a new way to 
compare the performance of distance-based tree reconstruction methods. This  may help explain why Neighbor-Joining performs so well, as its stochastic safety radius appears
close to optimal (while its more classical safety radius is the same as many other less accurate methods).

\keywords{tree \and reconstruction \and robustness to random error}
\end{abstract}


\newpage

\section{Introduction}
\label{intro}

A central task in evolutionary biology is the reconstruction of `phylogenetic' (evolutionary) trees from genetic data sampled from present-day species that describe how these species evolved
from a common ancestor. These trees can be estimated from a variety of different types of data, but a common approach involves data that are based on
some measure of `evolutionary distance' between species.  A variety of  fast  (polynomial-time) methods have been devised for building a phylogenetic $X$-tree from an arbitrary distance function $d$ on $X$. The most popular by far is  `Neighbor-Joining,'  (NJ) and the paper \cite{sai} that described this heuristic algorithm has now been cited more than 36,000 times.  

A desirable property of such methods is that when a distance function fits exactly on a binary (fully resolved) tree with branch lengths, then the method will return that underlying tree (up to the placement of the root) and its edge lengths.   Moreover, when a distance function $\delta$ is close to an exact fit on some binary tree $T$, many methods also come with a guarantee
that they will return $T$ when applied to $\delta$.  

How close $\delta$ needs to be to a `tree metric' $d$ depends crucially on $w_{\rm min}$,  the smallest interior edge length of $T$; a distance-based
tree reconstruction method is said to have  `safety radius' $\rho$ if the method is guaranteed to return  the underlying  binary tree $T$ when $\delta$ differs from $d$ by less than $\rho \cdot w_{\rm min}$ on each pair of leaves (a precise definition is given shortly).  This notion was introduced by Kevin Attenson 25 years ago in this journal \cite{att}, where he established that for NJ, this safety radius is $\rho= \frac{1}{2}$; moreover, this is the largest possible safety radius for any distance-based tree reconstruction method.  

While this classical safety radius has provided a precise formal way to compare different tree reconstruction methods, it is not always a good predictor of which method will perform more accurately 
on simulated (or real) data; for instance, methods that perform well (e.g. NJ) often have a safety radius that is equivalent to those of methods that perform  poorly (e.g. the Buneman tree \cite{berry}).   In this paper, we consider a more relaxed, statistically-based version of the purely combinatorial safety radius that treats the differences between observed and expected distances as independent random variables. We develop and apply this notion of a `stochastic safety radius', derive formal upper and lower bounds, and compare different tree reconstruction methods using it.

\subsection{Classical safety radius}
\label{sec:1}
For a  phylogenetic $X$--tree $T$ with positive edge lengths $w$, let $d_{(T,w)}$ denote the associated tree metric on $X$ and let $w_{\rm min}$ denote the minimal interior edge length of $T$. 
A method $M$ for 
reconstructing a phylogenetic $X$--tree from each dissimilarity map $\delta$ on $X$ is said to have a {\em $l_\infty$ safety radius} $\rho_n$ if, for any binary phylogenetic tree $T$ with $n$ leaves we have:
$$\|\delta - d_{(T, w)}\|_\infty < \rho_n \cdot  w_{\rm min} \Longrightarrow M(\delta) = T.$$
Here $\|*\|_\infty$ refers to the largest difference between $\delta$ and $d$ over all pairs from $X$. 
  Beginning with the pioneering work of Atteson \cite{att} it is now well known that no method can have $l_\infty$ safety radius $\rho_n>\frac{1}{2}$ and that certain methods such as NJ and `Balanced Minimimum Evolution' (BME) \cite{pau} achieve this bound for all $n$ \cite{par2}.   However, for other methods,  the $l_\infty$ safety radius is  less than $\frac{1}{2}$ and it can even converge to $0$ as $n$ grows  \cite{gas}, \cite{par2}.
  We will refer to the following constraint:  $$\|\delta - d_{(T, w)}\|_\infty < \frac{1}{2} \cdot  w_{\rm min}$$ as the {\em Atteson $l_\infty$ bound}.

Despite the mathematical elegance of these results, there are two problems associated with the $l_\infty$ safety radius approach. Firstly,  it is a strict combinatorial condition and the 
$l_\infty$ metric is extremely sensitive, particularly for large values of $n$,  since it takes only one pair of taxa to have a  $\delta$ value that conflicts substantially with its $d$ value to result in 
a violation of the safety radius. A second related point is that simulations show that  methods such as NJ often return the correct tree even when the safety radius is 
violated.  One combinatorial approach that goes some way towards addressing this second point was taken in \cite{mih} and developed further in \cite{eic}. This latter paper
also showed that NJ has a $l_2$ safety radius $\frac{1}{\sqrt{3}} \approx 0.5773$ for trees with $n=5$ leaves.

A related ``edge safety radius" approach was also pioneered by Atteson \cite{att}, who showed that ADDTREE \cite{sat} is optimal, whereas NJ is not.
However, simulations show that NJ performs well (i.e. as well as ADDTREE) regarding the edge safety radius, which somewhat contradicts the theory. 
Recent results by Bordewich and Mihaescu \cite{bor} also indicate that this theory has some shortcomings and produces a ranking among methods (Greedy BME/NJ) which is not observed in practice.
(there are also examples where the standard safety radius produces strange ranking, e.g.  UPGMA/LS methods \cite{gas}).

\section{Stochastic safety radius}
\label{sec:2}

Let us regard $\delta$ as
differing from $d$ by a random `error'.  More precisely, we suppose that:
\begin{equation}
\label{deltaeq}
\delta(x,y) = d_{(T,w)}(x,y) + \epsilon_{xy},
\end{equation}
where the $\epsilon_{xy}$ values are  independent  normal random variables, each with a mean of $0$ and  a variance equal to $\sigma^2$.
We refer to this simple model as the {\em random errors model}.

Note that in the context of this random errors model, maximum likelihood estimation (MLE) of a tree is equivalent to the ordinary least squares (OLS) tree-reconstruction method, since the OLS score that this method seeks to  minimise is proportional to minus the log of the likelihood function. Several heuristic approaches have been designed to search for the OLS tree, starting with the seminal 1967 papers of Cavalli-Sforza and Edwards \cite{cav} and Fitch and Margoliash \cite{fit}.

Throughout this paper, we let $N(\mu, \sigma^2)$ denote a normal random variable with mean $\mu$ and variance $\sigma^2$. Thus $\epsilon_{xy}$ has the distribution $N(0, \sigma^2)$. 
The following inequality and asymptotic equality (as $x \rightarrow \infty$) are helpful in the results that follow (their proof is in the Appendix). 
 For $x>0$:
\begin{equation}
\label{helps}
 \frac{1}{2} e^{-x^2/2} < \PP(N(0,1)>x)  \sim  \frac{e^{-x^2/2}}{x\sqrt{2\pi}},
 \end{equation}
 where $\sim$ denotes asymptotical equivalence as $x$ grows.

\subsection{Example: the four-taxon case}

With four taxa, most methods (if not all) will use the ``four-point rule" \cite{zar} and select the topology $xy|wz$ if: $$\delta(x,y)+\delta(w,z)< \min\{\delta(x,w)+\delta(y,z), \delta(x,z)+\delta(y,w)\}.$$
Now, under the random errors model, the three sums 
$$\delta(x,y)+\delta(w,z), \delta(x,w)+\delta(y,z), \delta(x,z)+\delta(y,w),$$
constitute three independent normal random variables, each with variance of $2\sigma^2$. Moreover, if the tree generating $d$ has topology $xy|wz$ with an interior edge of
length $w$, 
then the second and third sum have the same mean, which is larger than the mean of the first sum by $2w$.
In particular $\PP(\delta(x,w)+\delta(y,z) > \delta(x,y)+\delta(w,z))$ and $\PP(\delta(x,z)+\delta(y,w) > \delta(x,y)+\delta(w,z))$ are both equal to
$$\PP(N(2w, 4\sigma^2) >0).$$
Consequently, the probability that the correct tree topology $xy|wz$ is selected from $\delta$ is at least:
\begin{equation}
\label{quar}
1- 2\PP(N(2w, 4\sigma^2) <0) = 1-2\PP(N(0,1) < -w/\sigma).
\end{equation}
Thus, there will be (say) a $\sim98\%$ probability of correctly inferring the tree topology if $\sigma$ is $\frac{1}{2}w$ (or less).

It is interesting to compare this to the $l_\infty$ bound of Atteson \cite{att} for methods such as NJ. 
 Recall that this holds when  $|d(x,y)-\delta(x,y)| <\frac{w}{2}$ for all
six pairs $x,y$. Now, under the random effects model, the probability of these events all holding is exactly:
$$(-w/2 < \PP(N(0, \sigma^2) < w/2))^6 = (1-2\PP(N(0,1) >w/2\sigma))^6.$$
Consequently, for $\sigma$ set equal to $\frac{1}{2}w$, as above, the probability the $l_\infty$ bound holds is $(1-2\PP(N(0,1)>1))^6 \approx 0.1,$ which is 
much lower than  the 98\% probability described previously; moreover, in order to ensure the $l_\infty$ bound holds with 98\% probability, we would need
to reduce $\sigma$ to around $w/6$.

\subsection{Larger values of $n$}

To extend the above analysis from $n=4$ to larger values of $n$, it is useful to allow $\sigma^2$ to depend on $n$ (the reason for this becomes clear shortly).   Specifically, let us write:
\begin{equation}
\label{eq1}
\sigma^2 = \frac{c^2}{\log(n)},
\end{equation}
for some value $c \neq 0$.

Notice that $\sigma$ is converging to zero but very slowly (i.e. larger trees require more accuracy, but not a lot more).  First we consider what happens with the $l_\infty$ bound
as $n$ grows.

\begin{proposition}
\label{attlem}
Under the random errors model, the probability that the Atteson $l_\infty$ bound holds for all pairs $x,y$ in a tree with $n$ leaves converges to 0 for $c> \frac{1}{4}\cdot w_{\rm min}$ and converges to 1 for
$c< \frac{1}{4}\cdot w_{\rm min}$. 
\end{proposition}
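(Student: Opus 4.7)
The plan is to reduce the event to a product over independent coordinates and then read off the critical exponent from the Gaussian tail estimate in (\ref{helps}).

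First I would observe that under the random errors model, the event that the Atteson bound holds is the intersection, over the $\binom{n}{2}$ pairs, of the \emph{independent} events $\{|\epsilon_{xy}| < \tfrac{1}{2}w_{\rm min}\}$. Writing $\sigma = c/\sqrt{\log n}$, the probability that a single pair violates the bound is
\[
 q_n := 2\,\PP\!\left(N(0,1) > x_n\right), \qquad x_n := \frac{w_{\rm min}}{2\sigma} = \frac{w_{\rm min}\sqrt{\log n}}{2c}.
\]
Since $x_n^2/2 = w_{\rm min}^2 \log(n)/(8c^2)$, the asymptotic estimate in (\ref{helps}) gives
\[
 q_n \sim \frac{2}{x_n\sqrt{2\pi}}\, n^{-\alpha}, \qquad \alpha := \frac{w_{\rm min}^2}{8c^2},
\]
so $q_n$ decays like $n^{-\alpha}$ up to a $\sqrt{\log n}$ factor. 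The threshold $c = \tfrac{1}{4}w_{\rm min}$ corresponds exactly to $\alpha = 2$, matching the $\binom{n}{2} \asymp n^2$ number of pairs.

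Next I would handle the two regimes separately. For $c < \tfrac{1}{4}w_{\rm min}$ (so $\alpha > 2$), a union bound gives
\[
 \PP(\text{Atteson fails}) \le \binom{n}{2} q_n = O\!\left( n^{2-\alpha}/\sqrt{\log n}\right) \longrightarrow 0,
\]
so the probability the bound holds tends to $1$. For $c > \tfrac{1}{4}w_{\rm min}$ (so $\alpha < 2$), I would exploit independence directly:
\[
 \PP(\text{Atteson holds}) = (1-q_n)^{\binom{n}{2}},
\]
and take logs. Since $q_n \to 0$, $\log(1-q_n) \sim -q_n$, and hence $\binom{n}{2}\log(1-q_n) \sim -\tfrac{1}{2} n^2 q_n$, which behaves like $-n^{2-\alpha}/\sqrt{\log n} \to -\infty$. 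Therefore $\PP(\text{Atteson holds}) \to 0$.

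This essentially completes the argument; the lower bound in (\ref{helps}) can be used in place of the asymptotic equivalence if one prefers a fully non-asymptotic treatment of the $c > \tfrac{1}{4}w_{\rm min}$ case. The only step that requires any care is matching the polynomial factor $n^2$ coming from $\binom{n}{2}$ against the subexponential decay $n^{-\alpha}$, while absorbing the $\sqrt{\log n}$ correction from the $1/x_n$ prefactor in (\ref{helps}); this is a routine check once the critical exponent $\alpha = 2$ has been identified, so I do not anticipate any real obstacle beyond keeping track of constants.
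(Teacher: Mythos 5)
Your proposal is correct and follows essentially the same route as the paper: both reduce the event to the product $(1-q_n)^{\binom{n}{2}}$ over independent pairs, apply the Gaussian tail asymptotic from (\ref{helps}) to identify the critical exponent (the paper's $\beta$, your $\alpha = w_{\rm min}^2/8c^2$, with threshold $2$ at $c = \tfrac{1}{4}w_{\rm min}$), and compare against the $\binom{n}{2}\asymp n^2$ number of pairs. Your write-up is somewhat more explicit about the two regimes (union bound versus taking logs) than the paper's one-line appeal to the behaviour of $(1-x_n)^{\binom{n}{2}}$, but the substance is identical.
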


\begin{proof}
Let $w^*$ be the minimal interior edge length in $T$.  Then: 
$$\PP(|\delta(x,y) - d_{(T,w)}(x,y)| < w^*/2) = \PP(N(0, \sigma^2) \in (-w^*/2, w^*/2)),$$
where $N(0, \sigma^2)$ refers to a normal random variable with mean 0, and variance $\sigma^2$ given by Eqn. (\ref{eq1}).
Thus:
$$\PP(|\delta(x,y) - d_{(T,w)}(x,y)| < w^*/2) = 1- 2\PP\left(N(0,1) > \frac{w^*}{2c}\sqrt{\log(n)}\right).$$
Now, the $l_\infty$ bound is satisfied precisely when $|\delta(x,y) - d_{(T,w)}(x,y)| < w^*/2$ for all $x,y$, and so the probability of this $l_\infty$ bound occurring -- call it $P_\infty$ -- is  given by:
\begin{equation}
\label{pin}
P_\infty= \left(1- 2\PP\left(N(0,1) > \frac{w^*}{2c}\sqrt{\log(n)}\right)\right)^{\binom{n}{2}}.
\end{equation}
Proposition~\ref{attlem} now follows from the asymptotic equivalence in (\ref{helps}) and the observation that, for any sequence $x_n$ with $x_n \sim \frac{1}{n^\beta \sqrt{\log(n)}}$, we have 
$\left(1-x_n \right)^{\binom{n}{2}}$ which converges to $0$ and $1$ when  $\beta<2$ and  $\beta>2$, respectively, as $n$ grows. 
\hfill$\Box$
\end{proof} 

We now define a `stochastic safety radius' that is scaled in such a way as to allow comparisons that are meaningful even as $n$ tends to infinity.

\bigskip

\noindent {\bf Definition [Stochastic safety radius] }  For any $\eta>0$,  we will say that a distance-based tree reconstruction method $M$ has  {\em $\eta$-stochastic  safety radius} $s=s_n$ if for
every binary phylogenetic $X$-tree $T$ on $n$ leaves, with minimum interior edge length $w_{\rm min}$,  and with the distance $\delta$ on $X$ described by the random errors model, we have: $$c < s \cdot w_{\rm min} \Longrightarrow \PP(M(\delta) = T) \geq 1-\eta.$$

\begin{proposition}
\label{bac}
For any method $M$ that has $l_\infty$ safety radius $\rho_n>0$,  and for any $\eta>0$, there is a value $s=s_n>0$ so that $M$ has $\eta$-stochastic radius (at least) $s$ for all binary trees on $n$ leaves.  Moreover, as $n \rightarrow \infty$ we can take $s_n$ arbitrarily close to $\frac{1}{2} \rho_n$. 
\end{proposition}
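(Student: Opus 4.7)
The plan is to use the hypothesis that $M$ has $l_\infty$ safety radius $\rho_n$ to reduce the claim to a probability bound on the event $\{\|\delta - d_{(T,w)}\|_\infty < \rho_n \cdot w_{\rm min}\}$. Since the $\epsilon_{xy}$ are i.i.d.\ $N(0,\sigma^2)$, this event factors across pairs, giving
$$\PP(M(\delta) = T) \;\geq\; \PP(\|\delta - d_{(T,w)}\|_\infty < \rho_n w_{\rm min}) \;=\; \Bigl(1 - 2\PP(N(0,1) > \rho_n w_{\rm min}/\sigma)\Bigr)^{\binom{n}{2}},$$
with $\sigma = c/\sqrt{\log n}$. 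Thus the whole proposition reduces to showing that, for suitable $c$, the right-hand side exceeds $1-\eta$.

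For the first part (existence of some $s_n>0$ at each fixed $n$), I would observe that with $n$ fixed, letting $c \to 0^+$ sends $\sigma \to 0^+$ and hence $\PP(N(0,1) > \rho_n w_{\rm min}/\sigma) \to 0$, making the displayed product tend to $1$. By continuity, there is a threshold $s_n > 0$ such that $c < s_n \cdot w_{\rm min}$ forces the product to exceed $1-\eta$, which is precisely the defining property of an $\eta$-stochastic safety radius.

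For the asymptotic statement, fix any $\epsilon > 0$ and set $s := (\tfrac{1}{2} - \epsilon)\rho_n$, so that $\alpha := \rho_n / s = 1/(\tfrac{1}{2} - \epsilon) > 2$ (the key constant). Assuming $c < s \cdot w_{\rm min}$, we have $\rho_n w_{\rm min}/\sigma > \alpha \sqrt{\log n}$, and the asymptotic inequality in (\ref{helps}) gives
$$2\PP(N(0,1) > \alpha\sqrt{\log n}) \;\sim\; \sqrt{\tfrac{2}{\pi}}\cdot \frac{n^{-\alpha^2/2}}{\alpha \sqrt{\log n}}.$$
Multiplying by $\binom{n}{2} \sim n^2/2$ yields a quantity of order $n^{2 - \alpha^2/2}/\sqrt{\log n}$, which tends to $0$ precisely because $\alpha > 2$ forces $\alpha^2/2 > 2$. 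Applying the elementary inequality $(1-x)^k \geq 1 - kx$ (valid for $x \in [0,1]$) then shows the product on the right-hand side of the first display tends to $1$, so for all sufficiently large $n$ it exceeds $1-\eta$. This verifies that $s_n$ can be chosen as $(\tfrac{1}{2} - \epsilon)\rho_n$ for $n$ large, which is the desired asymptotic bound.

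The main obstacle is really just bookkeeping with the Gaussian tail; the structural argument is a near-copy of the calculation in Proposition~\ref{attlem}, with the role of the constant $1/4$ replaced by $\rho_n/2$. One subtlety worth flagging is that $\rho_n$ itself may depend on $n$ (and can even tend to $0$ for poor methods), but the statement is $\epsilon$-uniform rather than uniform in $\rho_n$: we fix $\epsilon$ first, then let $n$ grow, so $\alpha$ is a genuine constant strictly greater than $2$ throughout the tail estimate.
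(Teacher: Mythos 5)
Your proposal is correct and follows essentially the same route as the paper: reduce to the probability that the Atteson-type $l_\infty$ event holds, factor over the $\binom{n}{2}$ independent pairs, bound the Gaussian tail via (\ref{helps}), and note that $\binom{n}{2}n^{-\alpha^2/2}\to 0$ exactly when $\alpha=\rho_n/s>2$, i.e.\ $s<\tfrac{1}{2}\rho_n$. The only cosmetic difference is that the paper uses the explicit inequality $\PP(N(0,1)>x)<\tfrac{1}{2}e^{-x^2/2}$ (valid for all $x>0$) where you invoke the asymptotic equivalence plus Bernoulli's inequality; both are sound, and your remark about fixing $\epsilon$ before letting $n$ grow correctly handles the possible $n$-dependence of $\rho_n$.
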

\begin{proof}

If $c \geq s \cdot w_{\rm min}$ then, from the analogue of (\ref{pin}) (with $\frac{1}{2}w^*$ replaced by $\rho \cdot w^*$), we have:
$$P_\infty \geq \left(1- 2\PP\left(N(0,1) > \frac{\rho}{s}\sqrt{\log(n)}\right)\right)^{\binom{n}{2}}.$$
Applying the inequality in (\ref{helps}) gives:
$$P_\infty \geq \left (1- n^{-\rho^2/2s^2}\right)^{\binom{n}{2}} \geq 1- \binom{n}{2}n^{-\rho^2/2s^2},$$
and the very last term on the right can be made $< \eta$ by selecting $s= s_n$ sufficiently small. Moreover, as $n \rightarrow \infty$, 
we can take $s_n$ to approach $\frac{1}{2} \rho_n$ for any $\eta>0$.
\hfill$\Box$
\end{proof}

Notice that this proof just sets a lower bound on the $\eta$-stochastic safety radius.   This shows that any method having non-zero $l_\infty$ safety radius (e.g.  $1/2$ for NJ) also has non-zero $\eta$-stochastic safety radius, which is roughly equal to half of the $l_\infty$ safety radius (i.e. 1/4 with NJ). In other words, our  definition provides non-trivial performance criteria for those methods, with a lower bound that is easily computed from the $l_\infty$ results. However, the bound in Proposition~\ref{bac} is very severe in that
it requires that the $l_\infty$ bound to hold for all pairwise distances.  We will see that much better bounds do exist. Moreover, the following definition avoids having to consider the effect of $\eta$, which plays a minor role in all calculations.

\bigskip

\noindent {\bf Definition [Limiting stochastic safety radius]} We say that a distance-based tree reconstruction method $M$ has a {\em limiting stochastic safety radius} (LSSR) $r$ if for every $s<r$ and every $\eta>0$ the $\eta$-stochastic safety radius of $M$ is at least $s$ for all binary trees with sufficiently many leaves.

\bigskip

\section{Theoretical Results}

We first show that the limiting stochastic safety radius of a relatively simple quartet-based approach is considerably larger than the value $\frac{1}{4}$ that is required by Proposition~\ref{bac} to satisfy the 
 Atteson bound (where $\rho = \frac{1}{2}$).  We then present our main theoretical result (Theorem~\ref{mainthm}),   an absolute upper bound on the limiting stochastic safety radius of any distance-based method.

 \begin{proposition}
 \label{pro1}
 There is a distance-based tree reconstruction method which has a limiting stochastic safety radius of $\frac{1}{\sqrt{2}}$.
\end{proposition}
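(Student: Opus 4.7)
The plan is to exhibit a distance-based method whose correctness reduces to only $O(n)$ well-chosen quartet tests, and then to combine the four-taxon calculation of Section~2.1 with a union bound over those quartets.

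First I would invoke the classical fact that any binary phylogenetic $X$-tree $T$ on $n$ leaves is determined by a collection of $O(n)$ ``short quartets'' $\{q(e) : e \text{ an internal edge of } T\}$, where $q(e)$ is chosen so that the interior edge of the induced quartet tree is exactly $e$ (and so has length $w(e) \geq w_{\rm min}$). I would then define a distance-based method $M^*$ as follows: from $\delta$, apply the four-point rule to every $\{a,b,c,d\} \in \binom{X}{4}$ to obtain a quartet topology, and then run a combinatorial procedure of short-quartet / dyadic-closure type on this collection of inferred topologies; such a procedure is guaranteed to return $T$ whenever all of the $q(e)$ happen to be correctly inferred. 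Because the procedure operates only on the inferred topologies, $M^*$ is a well-defined distance-based method even though the short quartets themselves depend on the unknown $T$.

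Next I would bound the probability that any single short quartet is misinferred. By the calculation leading to (\ref{quar}), the four-point rule misinfers $q(e)$ with probability at most $2\PP(N(0,1) > w(e)/\sigma) \leq 2\PP(N(0,1) > w_{\rm min}/\sigma)$. Setting $\sigma^2 = c^2/\log(n)$ and assuming $c < s \cdot w_{\rm min}$ for some $s < 1/\sqrt{2}$ yields the bound $2\PP\bigl(N(0,1) > (1/s)\sqrt{\log n}\bigr)$, which by the tail estimate (\ref{helps}) is of order $n^{-1/(2s^2)}/\sqrt{\log n}$. A union bound over the $O(n)$ short quartets then gives a total failure probability of order $n^{\,1-1/(2s^2)}/\sqrt{\log n}$, and the exponent $1 - 1/(2s^2)$ is strictly negative precisely when $s^2 < 1/2$. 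So for any $\eta>0$ the failure probability falls below $\eta$ once $n$ is large enough, which yields the limiting stochastic safety radius bound $r \geq 1/\sqrt{2}$.

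The main obstacle is the first step, namely producing a reconstruction procedure that operates only on the inferred quartet topologies but still recovers $T$ as soon as just the $O(n)$ short-quartet topologies are correct --- despite the fact that the identity of those ``short'' quartets depends on $T$ itself and is a priori unknown. The gain from the naive constant $1/(2\sqrt{2})$ (obtained by insisting that \emph{all} $\binom{n}{4}$ quartet topologies be correct) up to $1/\sqrt{2}$ comes entirely from driving the number of decisive quartets down from $\Theta(n^4)$ to $\Theta(n)$, so any slack in this count would propagate directly into a worse constant. Once such a $T$-blind short-quartet reconstruction procedure is in place, the tail and union-bound calculations are routine.
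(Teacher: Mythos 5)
Your tail estimate and union bound are exactly right, and your overall strategy --- drive the number of quartets that must be simultaneously correct down to $n^{1+o(1)}$, then apply (\ref{quar}), (\ref{helps}) and Boole's inequality --- is the same as the paper's. The problem is the combinatorial ingredient you yourself flag as ``the main obstacle'' and then assume: a $T$-blind procedure that returns $T$ whenever just the $O(n)$ short quartets $q(e)$ are correctly inferred. The short-quartet and dyadic-closure methods you invoke do not have this property. Their correctness guarantees require that \emph{every} quartet in some larger, data-determined set (all quartets below a width threshold, or all quartets fed into the closure) be correct, not merely the $O(n)$ representative ones; a single erroneous long quartet inside that set can derail the closure or cause the width search to accept a wrong tree. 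The obstruction is not cosmetic: since the identity of the $q(e)$ depends on the unknown $T$, a wrong tree $T'$ may have all of \emph{its} short quartets confirmed by the erroneously inferred topologies, so ``all short quartets of $T$ are correct'' does not by itself let a blind procedure single out $T$. Until this step is supplied, the argument only certifies the naive constant obtained by requiring all $\binom{n}{4}$ quartets to be correct.

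The paper closes this gap with a different device: the adaptive quartet-query algorithm of Pearl and Tarsi \cite{pea} (see also \cite{kan}), which inserts leaves one at a time and asks only $\Theta(n\log n)$ quartet queries --- each answered by applying the four-point rule to $\delta$ --- with the guarantee that the tree is recovered whenever all \emph{queried} quartets are answered correctly. Because the interior path of every quartet of $T$ has length at least $w_{\rm min}$, the union bound over $\Theta(n\log n)$ events costs only an extra $\log n$ factor, which does not move the threshold $1/\sqrt{2}$. If you replace your hypothetical short-quartet procedure with this adaptive scheme (noting, as the paper does, that Boole's inequality remains valid even though the queries are adaptive and hence not stochastically independent), your calculation goes through essentially verbatim.
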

\begin{proof}
We use a result from \cite{pea} (see also \cite{kan}) which provides a tree reconstruction method that can recover a binary tree with $n$ leaves by asking for the 
topology of $\Theta( n \log (n))$ quartets (the questions asked are allowed to depend on the answers obtained up to that point).  Provided that all these quartet trees
are correctly returned, we will infer the correct underlying parent tree. Now, suppose we set $\sigma =  s w^*/\sqrt{\log n}$.  From Eqn. (\ref{quar}),  the probability that any particular quartet is correctly inferred is at least:
$$1-2\PP\left(N(0,1) < -\frac{w\sqrt{\log(n)}}{w^* s}\right),$$
where $w$ is the length of the interior path of the quartet in the parent tree. Since $w/w^* \geq 1$, Boole's inequality implies that the probability  that every particular quartet is correctly inferred is at least:
\begin{equation}
\label{Ceq}
1-2C n \log(n) \PP\left(N(0,1) < -\frac{\sqrt{\log(n)}}{s}\right),
\end{equation}
where $C$ is an upper bound constant in  the $\Theta( n \log (n))$ construction. Notice that this holds even though the quartet decisions are not (stochastically) independent.
The expression in (\ref{Ceq})  now converges to 1 for any value $s<\frac{1}{\sqrt{2}}$ as $n \rightarrow \infty$, by the asymptotic equivalence in (\ref{helps}).
\hfill$\Box$
\end{proof}

\begin{theorem}
\label{mainthm}
\begin{itemize}
\mbox{ } 
No distance-based tree reconstruction method has a limiting stochastic safety radius greater than 1.
\end{itemize}
\end{theorem}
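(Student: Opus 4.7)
My plan is a two-point argument in the spirit of Le Cam. Fix $s>1$ and a small $\eta>0$ to be chosen later. For arbitrarily large $n$ I would build two binary $X$-trees $T_n, T_n'$ on $n$ leaves, each with minimum interior edge length exactly $w_{\min}$, such that no method $M$ can reliably discriminate them under the random-errors model when $c<s\cdot w_{\min}$. The quantitative tool is
$$\PP_{T_n}(M(\delta)=T_n)+\PP_{T_n'}(M(\delta)=T_n')\le 1+\mathrm{TV}(\PP_{T_n},\PP_{T_n'}),$$
valid for every such $M$; if the total variation is at most $1-2\eta$, at least one of the two success probabilities falls below $1-\eta$, violating the $\eta$-stochastic safety radius at parameter $s$ and hence forcing the limiting stochastic safety radius of $M$ to be at most $s$. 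Because $\delta\mid T\sim N(d_T,\sigma^2 I)$, the total variation equals $2\PP(N(0,1)\le \|d_{T_n}-d_{T_n'}\|_2/(2\sigma))-1$, and the tail bound (\ref{helps}) reduces the whole problem to controlling the dimensionless ratio $\|d_{T_n}-d_{T_n'}\|_2/\sigma$.

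The natural candidate is a single NNI at a ``cherry-swap'' interior edge of length $w_{\min}$, with two of the four subtrees at that edge being single leaves: direct bookkeeping yields $\|d_{T_n}-d_{T_n'}\|_2^2=2(n-2)\,w_{\min}^2$, and with $\sigma=c/\sqrt{\log n}$ and $c\le s\,w_{\min}$ the ratio becomes $\sqrt{2(n-2)\log n}/s$, which diverges in $n$. The naive two-point bound therefore gives $\mathrm{TV}\to 1$ and no obstruction, and so needs to be refined. The correct replacement, which I expect to be the main technical ingredient, is the projection residual $D_n := \operatorname{dist}(d_{T_n},\mathcal{L}_{T_n'})$, where $\mathcal{L}_{T_n'}\subset\RR^{\binom{n}{2}}$ is the $(2n-3)$-dimensional subspace of dissimilarities realizable on topology $T_n'$ with arbitrary real edge weights; this is what governs any likelihood- or projection-based comparison between the two topologies, since an OLS/MLE discriminator against topology $T_n'$ depends on $\delta$ only through $(I-P_{\mathcal{L}_{T_n'}})\delta$.

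The hardest step is then a uniform-in-$n$ bound $D_n\le K\cdot w_{\min}$ for some constant $K$. In the 5-leaf analogue a short normal-equations calculation on the $3$-dimensional space $\mathcal{L}_{T_n'}^\perp$ spanned by independent four-point conditions of $T_n'$ gives $D_5=2w_{\min}/\sqrt{3}$; the natural generalization exploits the rank-two structure of the perturbation $d_{T_n}-d_{T_n'}$ near the NNI edge together with a block decomposition of the four-point-condition basis of $\mathcal{L}_{T_n'}^\perp$. Once such a bound is established, substituting into the refined two-point inequality produces $D_n/\sigma\lesssim\sqrt{\log n}/s$, and the asymptotic equivalence in (\ref{helps}) pins the threshold at $s=1$: for any $s$ strictly greater than $1$ there is an $\eta>0$ such that $\mathrm{TV}(\PP_{T_n},\PP_{T_n'})\le 1-2\eta$ along a subsequence of $n$, as required.
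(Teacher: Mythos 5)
Your approach has a structural gap that cannot be patched: a two-point (Le Cam) argument is incapable of producing the threshold at $s=1$, and the paper's proof is forced into a genuinely different, $2^n$-point construction for exactly this reason. As you yourself compute, for any pair of trees $T_n, T_n'$ differing by an NNI the squared $\ell_2$-distance between the two tree metrics grows like $n\, w_{\rm min}^2$, so with $\sigma^2=c^2/\log n$ the likelihood-ratio test between the two fixed distributions succeeds with probability tending to $1$ for every $c$; hence ${\rm TV}(\PP_{T_n},\PP_{T_n'})\to 1$ and the two-point inequality yields no obstruction. Your proposed fix --- replacing $\|d_{T_n}-d_{T_n'}\|_2$ by the projection residual $D_n$ onto $\mathcal{L}_{T_n'}$ --- is not legitimate: the inequality $\PP_{T}(M(\delta)=T)+\PP_{T'}(M(\delta)=T')\le 1+{\rm TV}(\PP_T,\PP_{T'})$ holds with the \emph{actual} total variation distance between the two generating distributions, and you cannot substitute a smaller quantity without proving that no method can exploit the full mean shift; the theorem quantifies over all distance-based methods, and the optimal two-point discriminator (the likelihood-ratio test with known edge lengths, which is a legitimate distance-based rule for this pair) does exploit it. Moreover, even granting the unproved bound $D_n\le K\, w_{\rm min}$, the resulting ratio $D_n/\sigma\asymp K\sqrt{\log n}/s$ still diverges, so the tail estimate (\ref{helps}) gives a failure probability of order $n^{-K^2/(8s^2)}\to 0$ for every $s$: a single binary comparison never becomes hard as $n$ grows, and no threshold at $s=1$ can emerge from it.

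The missing idea is that the threshold at $1$ arises from multiplying together $n$ \emph{independent} near-certain local decisions, not from one hard global decision. The paper builds $2^n$ ``pronged'' trees $T_{3n}({\bf t})$, ${\bf t}\in\{-1,1\}^n$, obtained by independently swapping two leaves inside each of $n$ disjoint triplet subtrees hanging off a fixed backbone; the likelihood factorizes over the triplets, each coordinate $t_j$ is estimated correctly with probability $1-\beta_n$ where $\beta_n\sim \frac{c}{2\sqrt{\pi\log n}}\, n^{-1/c^2}$, and $(1-\beta_n)^n\to 0$ precisely when $c>1$. The conclusion is then transferred from MLE to arbitrary methods via the classical fact that MLE maximises the average (over the $2^n$ equally weighted trees) probability of correctly recovering a discrete parameter, so some tree in the family must defeat any given method. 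To salvage your plan you would need to upgrade it from two hypotheses to an exponentially large, product-structured family --- at which point you have essentially reproduced the paper's argument.
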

\begin{proof}
The idea of the proof is to show that MLE (maximum likelihood estimation) cannot allow a safety radius with $c>1$ on a subset of trees (with prescribed branch lengths), from which it will follow that no other method could do so either on that subclass of trees (and thereby on all binary trees and with variable edge lengths).

Consider the binary tree $T_{3n}$ on the leaf set $\cL= \bigcup_{i=1}^n \{a_i, a'_i, b_i\}$ of size $3n$, obtained from any fixed binary tree on leaf set  $\{1,2,\ldots, n\}$ by replacing each leaf $i$ 
by the rooted triplet subtree $(a_i,a'_i)b_i$.  For $T_{3n}$  assign length 1 to all the interior edges and to the pendant edges that are incident with leaves of type $a_i$ and $a'_i$ (for all $i$),  and assign the length $2$ to the pendant edges that are incident with leaves of type $b_i$ (for all $i$). 
For a sequence ${\bf t} = (t_1, t_2, \ldots, t_n)$ where $t_i \in \{-1, 1\}$ let $T(\bf{t})$ be the tree obtained from $T_{3n}$ by interchanging the leaf labels $a'_i$ and $b_i$ precisely for each $i$ for which $t_i=-1$  (leaving all edge lengths unchanged -- thus all cherry pendant edges have length 1,while the non-cherry pendant edges have length 2). 
Thus $\cT= \{T_{3n}({\bf t}): {\bf t} \in \{-1,1\}^n\}$ is a set of $2^n$ binary trees, each with a leaf set $\cL$ of $3n$ leaves, and with the prescribed branch lengths described (see Fig.~\ref{fig1}).

\begin{figure}[ht]
\includegraphics[width=0.75\textwidth]{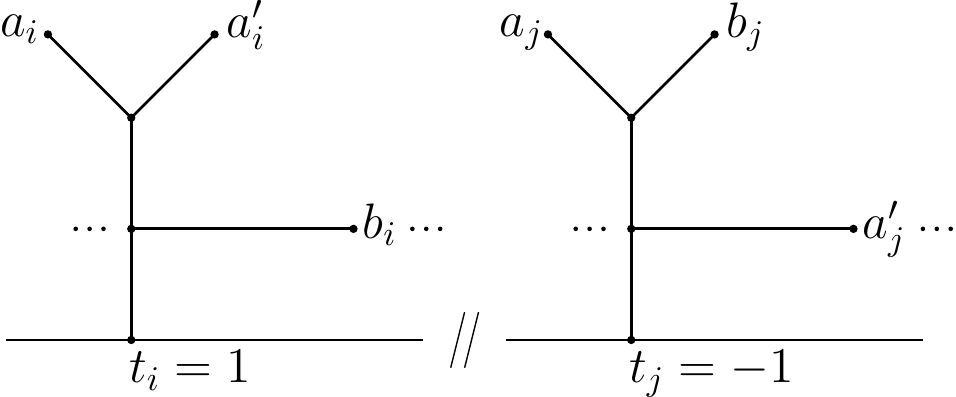}
\caption{The tree $T_{3n}({\bf t}).$}
\label{fig1}
\end{figure}

Let $d^{\bf t}$ denote the tree metric induced on $\cL$ by the tree $T_{3n}({\bf t})$ with its associated branch lengths, and let 
$\delta^{\bf t}$ denote the corresponding distances under the random errors model (so $\delta^{\bf t}= d^{\bf t} + \epsilon$, for a vector $\epsilon$ of independent Gaussians with mean 0  and variance $\sigma^2$ (and independent of ${\bf t}$)).
Notice that we can partition any vector of distances $\delta$ on $\cL$ into two parts $\delta_B$ and $\delta_W$, where $\delta_B$ compares leaves between different triplet-subtrees ($B =$ `between') and $\delta_W$ compares leaves within given triplet-subtrees ($W=$ `within'); formally:
\begin{itemize}
\item $\delta_B$ is the sequence of $\delta$-values
for all pairs $\omega \in B$ where:
$$B = \{ (l_i, l_j):  i,j \in \{1,\ldots, n\},  i \neq j,  l_i \in \{a_i, a'_i, b_i\}, l_j \in \{a_j, a'_j, b_j\}\};$$ 
\item
$\delta_W$ is the sequence of $\delta$-values
for all pairs $\omega\in W$,  where: 
$$W= \{(l_i, l'_i): i \in \{1,\ldots, n\}, l_i, l_i'  \in \{a_i, a'_i, b_i\}, l_i \neq l'_i\}$$ 
\end{itemize}
It is useful to partition the `within' pairs further  as follows.  For each $j \in \{1, \ldots, n\}$, let 
$V(j) = \{(a_j, a'_j), (a_j, b_j)\}$ and let $U(j)= \{(a'_j, b_j)\}$. 

A fundamental observation at this point is that the probability distribution of $\delta^{\bf t}$ on all pairs from $B$ and the pair in  $U(j)$ (for each $j$)  does not depend on ${\bf t}$ at all.
 Moreover,
for pairs from $V(j)$, the dependence of  $\delta^{\bf t}$ on ${\bf t}$ is only via $t_j$. 
By this invariance and the independence assumption in the random errors model, for any $\delta$ and ${\bf t} \in \{-1,1\}^n$, the probability density function for $\delta^{\bf t}$ can be written in the following factored way:
\begin{equation}
\label{likeq}
f(\delta|{\bf t}) = \prod_{\omega \in B} f(\delta_\omega) \cdot \prod_{j=1}^n f(\delta (a'_j, b_j)) \cdot \prod_{j=1}^n f(\delta (a_j, a'_j)|t_j) f(\delta (a_j, b_j)|t_j),
\end{equation}
where, for $\omega \in V(j)$, the terms in the third product are given by:
\begin{equation}
\label{feq0}
f(\delta(\omega)|t_j)  = \frac{1}{\sigma\sqrt{2\pi}} \exp\left (-\frac{(\delta(\omega) - d_{w_j}(\omega))^2}{2\sigma^2}\right),
\end{equation}
in which:
$$d_{w_j}(\omega) = 
\begin{cases}
2, & \mbox{ if } t_j=1 \mbox{ and } \omega = (a_j, a'_j)  \mbox{ or } t_j = -1 \mbox{ and } \omega = (a_j, b_j);\\
4, &  \mbox{ if } t_j=-1 \mbox{ and } \omega = (a_j, a'_j) \mbox{ or } t_j = 1 \mbox{ and } \omega = (a_j, b_j).
\end{cases}
$$

Notice that from our fundamental observation above, the terms in the first two products appearing in (\ref{likeq}) do not depend at all on ${\bf t}$. 

Thus, for any $\delta$, the maximum likelihood estimate of ${\bf t}$ given $\delta$  is the sequence $(t_1, \ldots, t_n)$ where, for each $j \in \{1, \ldots, n\},$
$t_j$ maximises the product $$ f(\delta (a_j, a'_j)|t_j) \cdot f(\delta (a_j, b_j)|t_j).$$

For such a ML estimate ${\bf t}$,  the following inequality must hold for all $j \in  \{1, \ldots, n\}$:
\begin{equation}
\label{feq}
L_j : = \frac{ f(\delta (a_j, a'_j)|t_j)}{f(\delta (a_j, a'_j)|-t_j)} \cdot \frac{f(\delta (a_j, b_j)|t_j)}{f(\delta (a_j, b_j)|-t_j)} \geq 1.
\end{equation}

Assume now that  ${\bf t} = {\bf 1} = (1,1, \ldots, 1)$. Then 
if we let $\Delta_j = \delta(a_j, b_j) - \delta(a_j, a'_j)$, application of (\ref{feq0}) in (\ref{feq}) simplifies  (after some algebra) to the more attractive equation:
\begin{equation}
\label{elegant}
L_j = \exp\left(\frac{2\Delta_j}{\sigma^2}\right).
\end{equation}

To this point, $\delta$ has been an arbitrary distance.  Now, let us further assume that $\delta$ is generated on $T_{3n}({\bf 1})$ under the random errors model; in other words, $\delta = \delta^{\bf t}$ for
${\bf t} ={\bf 1}$. 
We wish to calculate the probability --  call it $p_n$ -- that MLE will correctly estimate the generating tree $T_{3n}({\bf 1})$. By  (\ref{feq}) and independence assumptions in the random errors model, this probability 
$p_n$ satisfies: 
\begin{equation}
\label{feq2}
 p_n \leq \PP(\bigcap_{j=1}^n L_j \geq 1| {\bf t= 1} ) = \prod_{j=1}^n \PP(L_j  \geq 1|{\bf t= 1}).
\end{equation}
Now, from  (\ref{elegant}), $$ \PP(L_j  \geq 1|{\bf t=1}) = \PP(\Delta_j >0|t_j=1),$$
and $\Delta_j$ has a normal distribution with a mean of 2 and a variance of $2\sigma^2$.
Thus: 
\begin{equation}
\label{del}
\PP(\Delta_j >0|t_j=1) = 1 - \PP\left(Z< -\frac{\sqrt{2}}{\sigma}\right),
\end{equation}
where $Z = N(0,1)$ is a standard normal random variable.
Substituting (\ref{eq1}) and (\ref{helps})  into (\ref{del}) gives: 
$$\PP(\Delta_j >0|t_j=1) = 1-\beta_n,$$
where 
$$\beta_n \sim \frac{c}{2\sqrt{\pi \log(n)}} n^{-1/c^2}.$$
Applying this to Eqn. (\ref{feq2}) gives that the probability $p_n$ that MLE correctly estimates the generating tree
satisfies  $$p_n \leq (1- \beta_n)^n.$$ 
Straightforward calculus now shows that as $n\rightarrow \infty$, the sequence $(1- \beta_n)^n$ (and hence $p_n$)  converges to 0 if $c>1$.

This shows that we cannot recover $T_{3n}({\bf 1})$ with an accuracy bounded away from 0 as $n$ becomes large, by using MLE,  if $s>1$ in the definition of the stochastic safety radius (since the interior edges all have length 1, we have $w^*=1$ and so $c>1$ for $s>1$). Moreover, by symmetry, the same conclusion applies to any of the $2^n$ 
trees $T_{3n}({\bf t})$ (there is nothing `special' about ${\bf t} = {\bf 1}$). We now invoke a classic result that  MLE is an estimation method that maximises the average reconstruction accuracy of a discrete parameter when a family of distributions depends on just that parameter ({\em c.f.} Theorem 10.3.1 of
\cite{cas} or Theorem 17.2 of \cite{gui}) --  in our case, the discrete parameter is the vector ${\bf t}$ (which determines the  tree $T_{3n}({\bf t})$).  It follows that for any distance-based reconstruction method, the limiting stochastic safety radius cannot be larger than $1$. 
\hfill$\Box$

\end{proof}

\section{Simulation Results}

We have seen in the previous section that the limiting stochastic safety radius (LSSR) of any algorithm is at most 1 (Theorem \ref{mainthm}), and that a simple quartet algorithm \cite{pea} has a LSSR value at least   $\frac{1}{\sqrt{2}} \approx 0.71$ (Proposition \ref{pro1}). The gap is relatively small between these two bounds and we expect that more sophisticated algorithms have LSSR values that are substantially higher than $\frac{1}{\sqrt{2}}$. In this section, we turn to simulations to study the accuracy of mainstream distance-based methods under the random errors model, with realistic numbers of taxa (previous results are asymptotic). Our goal is to compare these methods and to check how close they come in practice to the 1 bound prescribed by Theorem \ref{mainthm}.  Notice that this theorem was established using the pronged trees of Figure 1. These trees are expected to be difficult for two reasons: (1) all internal branches have the same length and thus no branch is easy; (2) a large number ($2n/3$) of taxon pairs are separated by a single internal branch and are likely to be wrongly exchanged, when trying to infer these trees from noisy data.  Not all tree shapes possess this property; for example, in a perfectly balanced tree, all non-cherry taxon pairs are separated by at least {\em two} internal edges. Thus, we also compare, using simulations, different tree shapes, to establish if some of them are (stochastically) `harder' than the others to reconstruct (depending, perhaps, on the inference method), or whether the opposite is true and all tree shapes seem to be equally difficult.

In the following, we first describe the methods being compared and the comparison criteria, we then study their performance with pronged and other (e.g. perfectly balanced) extreme trees, and, lastly, we use randomly generated tree shapes to obtain average accuracy measures under the random errors model.

\subsection{  Methods tested and comparison criteria}
We ran four standard algorithms using FastME implementation (http://www.atgc-montpellier.fr/fastme/):

\begin{enumerate}[(1)]
\item	  GME+OLS (Greedy minimum evolution with ordinary least squares)  \cite{des} is a greedy algorithm that iteratively adds taxa on a growing tree, minimizing at each step the ordinary least squares (OLS) tree length estimate, in accordance with the OLS version of the minimum evolution principle \cite{rzh}. The performance of this algorithm was analysed by \cite{par2}, who showed that its  $l_\infty$ safety radius tends to 0 with increasing $n$. 
\item	 UNJ (Unweighted NJ) \cite{gas97},  which is the unweighted (OLS) version of NJ, with an $l_\infty$ safety radius of $\frac{1}{2}$ \cite{att}.
\item	 GME+BME \cite{des} which uses the same iterative taxon addition scheme as GME+OLS, but optimizes the balanced version of minimum evolution (BME, \cite{pau}) and has an optimal $l_\infty$ safety radius of $\frac{1}{2}$ \cite{par2}.
\item	 NJ \cite{sai} with  $l_\infty$ safety radius of $\frac{1}{2}$  \cite{att}. We showed \cite{gas2} that NJ greedily minimises BME at each agglomeration step (and not the OLS version of minimum evolution, as was originally suggested).
\end{enumerate}

The aim was to see if there is any difference between the algorithmic schemes (taxon addition versus cherry agglomeration), and between the criteria being optimised (BME vs OLS minimum evolution). Because of the OLS-type noise in the random errors model, better accuracy is expected for OLS-based algorithms (UNJ and GME+OLS). On the other hand, the $l_\infty$ safety radius  differs widely among these algorithms and converges to 0 for GME+OLS; thus the second aim was to check whether Atteson's  predictions are observed in practice.

In addition,  we implemented the quartet method of \cite{pea} using the first algorithm described by \cite{kan} with $O(n^2)$  time complexity and  $O(n\log_2(n))$ quartet queries. The usual four-point rule \cite{sat} \cite{zar} was used to answer the quartet queries. Our aim was to compare the accuracy of this simple algorithm, mostly used for theoretical purposes, to that of algorithms being widely used in phylogenetics, and to check how this algorithm behaves regarding our $\frac{1}{\sqrt{2}}$ bound of Proposition~\ref{pro1}.

Two criteria were used to compare algorithm accuracy: (1) the probability $P_c$ of recovering the entire topology of the simulated tree; (2) the normalized bipartition distance (Robinson-Foulds (RF) \cite{rob}) between the inferred and simulated trees, which is equal to 0 when both trees define the same bipartitions, and to 1 when they do not share any bipartition in common.

\subsection{Algorithm accuracy with pronged and other extreme trees}
    
We used pronged trees (Fig.~\ref{fig1}, used in the proof of Theorem~\ref{mainthm}) with a number of taxa $n=12, 30, 90, 270$ and $810$.  We also used: caterpillar trees with $n=90$ and $270$; perfectly balanced trees with $n=96$ and $384$;  and ``balanced+pronged" trees, where each leaf of a perfectly balanced tree is replaced by the same three-taxon tree as in the pronged trees, with  $n=72$ and 288. Caterpillar and perfectly balanced tree shapes are extreme regarding a number of measurements (e.g. diameter, number of cherries, etc.), and the pronged 3-taxon tree is assumed to make tree inference difficult (see above). In all of these trees, all internal branches had an equal length of 1, which (again) makes tree inference difficult. The length of the external branches was 1 for the caterpillar and balanced trees, and as shown in Fig.~\ref{fig1} for the pronged trees.  The pairwise distances were computed and perturbed by an independent and identically distributed normal noise with standard deviation equal to
 $1/\sqrt{\log(n)}$, i.e.  the highest possible noise level regarding Theorem~\ref{mainthm}, beyond which no algorithm can accurately recover every tree correctly as $n$ grows. The goal was to check if, in these especially difficult conditions, the standard algorithms (e.g. NJ) still show some ability to recover the correct tree. In these conditions, the quartet method had very poor results that are not shown (but see below). For each of the tree shapes and $n$ values, 500 data-sets were generated to obtain average error estimates.  
The results are summarised in Table~1. We see the following:


\begin{table}
\begin{center}
\caption{Algorithm accuracy with the limiting noise level ($\sigma = w^*/\sqrt{\log(n)}$) and extreme tree shapes}
\resizebox{11.75cm}{!}{
\begin{tabular}{|c|c|c|c|c|c|c|c|c|c|}
\cline{1-10}
& &	\multicolumn{2}{|c|}{GME+OLS}&	\multicolumn{2}{|c|}{GME+BME} &	\multicolumn{2}{|c|}{UNJ}& \multicolumn{2}{|c|}{NJ}\\
\cline{2-10}Tree shape & \#taxa	&$P_c$	&RF	&$P_c$	&RF	&$P_c$	&RF	&$P_c$	&RF \\
\cline{1-10} \multirow{5}{*}{pronged (Fig. 1)} & 12 & 0.748 & 0.03111 & 0.750 & 0.03088 & 0.898 & 0.01266 & 0.890 & 0.01377 \\
\cline{2-10}& 30 & 0.734 & 0.01103 & 0.758 & 0.00992 & 0.884 & 0.00444 & 0.872 & 0.00488 \\
\cline{2-10}& 90 & 0.768 & 0.00303 & 0.804 & 0.00243 & 0.888 & 0.00128 & 0.888 & 0.00128 \\
\cline{2-10}& 270 & 0.834 & 0.00069 & 0.840 & 0.00063 & 0.924 & 0.00029 & 0.920 & 0.00030 \\
\cline{2-10}& 810 & 0.856 & 0.00019 & 0.850 & 0.00020 & 0.931 & 0.00009 & 0.944 & 0.00007  \\
\cline{1-10} \multirow{2}{*}{caterpillar} & 90 & 0.736 & 0.00360 & 0.426 & 0.00937 & 0.996 & 0.00005 & 0.966 & 0.00039 \\
\cline{2-10}& 270 & 0.792 & 0.00095 & 0.392 & 0.00348 & 0.998 & 0.00001 &  0.992 & 0.00003 \\
\cline{1-10}\multirow{2}{*}{balanced} & 96 & 0.892 & 0.00133 & 0.884 & 0.00133 & 0.980 & 0.00032 & 0.976 & 0.00037 \\
\cline{2-10}& 384 & 0.892 & 0.00031 & 0.918 & 0.00024 & 0.976 & 0.00009 & 0.976 & 0.00009 \\
\cline{1-10}\multirow{2}{*}{balanced + pronged} & 72 & 0.828 & 0.00281 & 0.846 & 0.00243 & 0.918 & 0.00130 & 0.916 & 0.00136 \\
\cline{2-10}& 288 & 0.854 & 0.00055 & 0.876 & 0.00047 & 0.938 & 0.00022 & 0.938 & 0.00022 \\
\cline{1-10}\noalign{\smallskip}
\end{tabular}
} \end{center}
\label{tab:1}       
\end{table}

\begin{itemize}
\item 
{\bf Pronged trees}  are indeed difficult to reconstruct accurately, compared to perfectly balanced trees. However, for all algorithms, the probability of recovery ($P_c$) increases with $n$. According to this result, the four tested algorithms could have LSSR equal to 1. However, the algorithms are not equivalent in their performance.  We see a clear advantage of the agglomerative scheme (NJ and UNJ) over taxon addition (GME+OLS and GME+BME), a finding that has already been observed in other simulation studies (e.g. \cite{des}). On the other hand, there is no significant difference (considering both $P_c$ and RF) between the algorithms that minimise the OLS version of minimum evolution (GME+OLS and UNJ) and their BME counterparts (GME+BME and NJ, respectively). Notably, we do not see any sign of weakness of GME+OLS, as predicted by its limiting $l_\infty$ safety radius of zero \cite{par2}.

\item
{\bf Caterpillar trees} give another view. Again we observe the clear advantage of the agglomeration scheme that obtains nearly perfect results ($P_c \approx 1$), especially with UNJ, which  is substantially better than NJ regarding both $P_c$ and RF criteria. This latter finding is expected with such unbalanced trees, where the matrix reduction step is better achieved by UNJ, which  accounts for the number of taxa in both agglomerated subtrees, while NJ uses equal weights of $\frac{1}{2}$. Based on these results, NJ and UNJ seem again to have LSSR of (close to) 1. GME-OLS has a lower $P_c$ value, but this increases with $n$. In the opposite, GME-BME not only has low $P_c$ ($< 0.5$), but this decreases with $n$. We have no clear explanation for this poor performance but, based on this result, it is unlikely that GME+BME has LSSR equal to 1.

\item 
{\bf Perfectly balanced trees} confirm again the superiority of the agglomeration scheme, compared to taxon addition. NJ and UNJ are nearly the same, as expected with well-balanced trees (see above). However, although these trees are relatively easy for all algorithms ($P_c \geq 0.9$), we do not see any improvement with larger $n$ values. This questions our previous assumption that LSSR could be equal to 1 for the algorithms tested, unless the convergence towards LSSR is slow.

\item
{\bf Balanced+pronged trees}  are more difficult than balanced trees, as expected. However, for all algorithms, the accuracy increases with $n$.  Algorithm comparisons are consistent with the previous ones: the agglomeration scheme performs better than taxon addition; there is little difference between the OLS and BME versions of algorithms, especially regarding NJ versus UNJ, which are nearly equivalent with such well-balanced trees.

\end{itemize}

	To summarise, NJ and UNJ have remarkably high accuracy with these difficult trees and conditions. These results suggest that these two methods might have an LSSR equal to the optimal value of 1.  Note, however, that the performance of NJ and UNJ with perfectly balanced trees lags a little behind, as their accuracy does not seem to improve when $n$ increases; although, this may be because the convergence is slow. The taxon addition scheme is clearly less accurate and the results of GME-BME with caterpillar trees seem to indicate that this algorithm does not have an optimal LSSR of 1. Even though these conclusions are somewhat speculative, as $n$ remained relatively moderate in all of our experiments, these results provide directions for future investigations on the LSSR of mainstream methods.

\subsection{Algorithm accuracy with random trees}

Up to this point, our experiments have used a limited number of extreme tree shapes. In this subsection, we use random trees in the search for other potentially difficult cases and to estimate the average accuracy of tested algorithms under the random errors model. 	

Tree shapes were generated uniformly at random (the so-called `Proportional-to-Distinguishable-Arrangements (PDA)  model') and all branch lengths were set to 1. The number of taxa was 
$n=10, 30, 90, 270$ and $810$.  The pairwise distances were perturbed by an i.i.d. normal noise with standard deviation equal to $1/\sqrt{\log(n)}$, as in previous experiments. Again, these noise level and trees (with equal branch lengths) make tree inference difficult.


\begin{table}
\begin{center}
\caption{Algorithm accuracy with random trees and a  limiting noise level  ($\sigma = w^*/\sqrt{\log(n)}$)}
\resizebox{11.75cm}{!}{
\begin{tabular}{|c|c|c|c|c|c|c|c|c|}
\cline{1-9}
&	\multicolumn{2}{|c|}{GME+OLS}&	\multicolumn{2}{|c|}{GME+BME} &	\multicolumn{2}{|c|}{UNJ}& \multicolumn{2}{|c|}{NJ}\\
\cline{1-9} \#taxa	&$P_c$	&RF	&$P_c$	&RF	&$P_c$	&RF	&$P_c$	&RF \\
\cline{1-9} 10 & 0.710 & 0.04714 & 0.672 & 0.05542 & 0.904 & 0.01457 & 0.854 & 0.02171 \\
\cline{1-9}  30 &
0.778 & 
0.00918 &
0.800 &
0.00822 &
0.948 &
0.00199 &
0.936 &
0.00251 \\
\cline{1-9} 90 &
0.762 &
0.00328 &
0.830 &
0.00209 &
0.964 &
0.00045 &
0.960 &
0.00051 \\
\cline{1-9}  270 &
0.646 &
0.00164 &
0.800 &
0.00080 &
0.966 &
0.00014 &
0.966 & 
0.00014 \\
\cline{1-9}  810 &
0.498 &
0.00089 &
0.798 &
0.00027 &
0.970 &
0.00004 &
0.970 & 
0.00004 \\
\cline{1-9}\noalign{\smallskip}
\end{tabular}
} \end{center}
\label{tab:2}       
\end{table}

The average results over 500 data-sets are displayed in Table 2. NJ and UNJ results are quite consistent with those obtained with extreme tree shapes ({\em cf} Table 1). With moderate values of $n$, UNJ is slightly more accurate than NJ, as expected with the OLS-type noise used in these simulations. With large values of $n$, UNJ and NJ are nearly perfect ($P_c \approx 1$), and both algorithms become strictly equivalent. Again, the results seem to indicate that both UNJ and NJ could have an optimal LSSR of 1. We cannot exclude that particularly difficult trees do exist, but these must be rare, and in average NJ and UNJ appear to be highly accurate at a noise level that is four times larger than the limit for Atteson's approach to apply.

With taxon addition, the picture is different. GME+OLS has low accuracy that drops when $n$ increases, a finding which may be seen as consistent with the predictions from \cite{par2}  using the $l_\infty$ safety radius. GME+BME performs better but its accuracy is not that high and seems to stabilise around 0.8 when $n$ increases. This observation is probably  explained by the fact that some trees (e.g. caterpillars, Table 1) are difficult for this algorithm. In both cases, the results in Table 2 seem to indicate that neither GME+BME nor GME+OLS has an optimal LSSR of 1.

\subsection{Algorithm accuracy with various noise levels}

We also ran simulations where the noise level varied around the limiting value used in previous experiments 
($\sigma = rw^*/\sqrt{\log(n)}, r=1$), in order to study the sharpness of our bounds with realistic numbers of taxa. Our goal was to check whether the accuracy improves substantially for taxon addition when $r$ is less than 1, and whether the agglomeration algorithms (NJ and UNJ) still show some ability to recover the correct tree when $r$ is larger than 1. We also studied the performance of the quartet method with (relatively) low $r$ values. 
We used the same random trees (with all branch lengths equal to $1=w^*$), as in the previous subsection, but used $r=3/5, 3/4$ and $4/3$ instead of $r=1$.


\begin{table}
\begin{center}
\caption{Algorithm accuracy with various noise levels}
\resizebox{11.75cm}{!}{
\begin{tabular}{|c|c|c|c|c|c|c|c|c|c|}
\cline{1-10}
& &	\multicolumn{2}{|c|}{GME+OLS}&	\multicolumn{2}{|c|}{GME+BME} &	\multicolumn{2}{|c|}{UNJ}& \multicolumn{2}{|c|}{NJ}\\
\cline{2-10}$r$ (noise level)  & \#taxa	&$P_c$	&RF	&$P_c$	&RF	&$P_c$	&RF	&$P_c$	&RF \\
\cline{1-10} \multirow{3}{*}{\nicefrac{3}{4}} &
30 &
0.986 &
0.00051 &
0.992 &
0.00029 &
1.0 &
0.0 &
1.0 &
0.0 \\
 \cline{2-10} & 90 &
0.974 &
0.00029 &
0.986 &
0.00016 &
1.0 &
0.0 &
1.0 &
0.0 \\
 \cline{2-10} & 270 &
0.982 &
0.00007 &
0.996 &
0.00001 &
0.998 & 
$<$0.00001 &
0.998 & 
$<$0.00001 \\
\cline{1-10} \multirow{3}{*}{1} &
30 &
0.778 &
0.00918 &
0.800 &
0.00822 &
0.948 &
0.00199 &
0.936 &
0.00251 \\
\cline{2-10} & 90 &
0.762 & 
0.00328 &
0.830 &
0.00209 &
0.964 &
0.00045 &
0.960 &
0.00051 \\
\cline{2-10} & 270 &
0.646 &
0.00164 &
0.800 &
0.00080 &
0.966 &
0.00014 &
0.966 &
0.00014 \\
\cline{1-10} \multirow{3}{*}{4/3} &
30 &
0.316 &
0.04170 &
0.314 &
0.04177 &
0.670 &
0.01414 &
0.606 &
0.01696 \\
\cline{2-10} &
90 &
0.204 &
0.01972 &
0.240 &
0.01696 &
0.634 &
0.00556 &
0.616 &
0.00590 \\
\cline{2-10} &
270 &
0.054 &
0.01099 &
0.122 &
0.00776 &
0.592 &
0.00202 &
0.560 & 
0.00217\\
\cline{1-10}\noalign{\smallskip}
\end{tabular}
} \end{center}
\label{tab:3}       
\end{table}

The average results over 500 data sets are displayed in Table 3 for the standard algorithms. We see that our previous conclusions are confirmed when $r$ differs from 1: the agglomeration scheme performs better than taxon addition; there is little difference between both minimum evolution versions; however, UNJ is slightly better than NJ (e.g. see $r=4/3$, with both $P_c$ and RF). We also see that the 1 bound prescribed by Theorem 1 is rather sharp with moderate number of taxa: with $r=3/4$, both taxon addition algorithms have high accuracy for all $n$ values, and NJ and UNJ are nearly perfect. Conversely, with $r=4/3$,  the accuracy of all algorithms drops dramatically, especially that of the taxon addition scheme where $P_c$ approaches 0 with $n=270$. 
This confirms  that the limiting optimal bound of Theorem~\ref{mainthm}, obtained with special (``pronged") trees, is robust and found again, at least qualitatively, with random trees and realistic $n$ values.


\begin{table}
\begin{center}
\caption{Accuracy of the quartet method with various noise levels}
\resizebox{8cm}{!}{
\begin{tabular}{|c|c|c|c|c|}
\cline{1-5}
$r$ (noise level) & \multicolumn{2}{|c|}{\nicefrac{3}{5}}&	\multicolumn{2}{|c|}{\nicefrac{3}{4}} \\
\cline{1-5} \#taxa	&$P_c$	&RF	&$P_c$	&RF  \\
\cline{1-5} 30 &
0.902 &
0.556 &
0.556 &
0.09170 \\
\cline{1-5} 90 &
0.946 &
0.479 &
0.479 &
0.10795 \\
\cline{1-5} 270 &
0.952 &
0.370 &
0.370 &
0.12478 \\
\cline{1-5}\noalign{\smallskip}
\end{tabular}
} \end{center}
\label{tab:4}       
\end{table}

Table 4 displays the results of the quartet method for $r=3/5$ and $3/4$ (the results with $r=1$ are quite poor and are not shown). Again, the stochastic radius framework and our bounds of Proposition~\ref{pro1} 
(LSSR $ \geq 1/\sqrt{2} \approx 0.71$) and Theorem~\ref{mainthm} (LSSR $\leq1$) have a good predictive accuracy.  With $r=3/5$, the accuracy is high and increases with $n$, while we observe the opposite with $r=3/4$, which seems to indicate that the quartet method has a LSSR close to our $1/\sqrt{2}$ bound.  Note, however, that when $r$ approaches $\frac{1}{\sqrt{2}}$   from below (e.g. $r = 2/3$ or $r = 0.70$), the accuracy is not that high and does not increase with the moderate values of $n$ used in these simulations. This is most likely explained by the very slow convergence of the bound in Equation (\ref{Ceq}), combined with asymptotic equivalence (\ref{helps}), when $r$ is close to $\frac{1}{\sqrt{2}}$.

\section{Discussion}

	Our simulation results show that the stochastic radius framework introduced in this paper has a good predictive capacity and seems to be robust. The optimal bound ($r = 1$) of Theorem~\ref{mainthm}, which was obtained with special ``pronged" trees, seems to apply to a large variety of trees (no tree is easy or else these are quite rare). Moreover, when the noise level decreases below $r = 1$, the accuracy rises for all algorithms and all values of $n$. The behaviour of the quartet method (with moderate $n$ values) is also consistent with the (limiting) prediction of Proposition 3. We thus believe that the LSSR approach will show a high capacity in predicting algorithm performance in realistic conditions, a property which does not hold in several cases with the $l_\infty$ safety radius, as noted in the Introduction.

An important outcome of the simulations is that the ability to recover single branches may still be high, even when the probability of recovering the entire tree drops due to high noise level; for example, with $r = 4/3$ and $n = 270$, the probability that any given branch is correct is higher than $\sim 0.99$ for all standard algorithms (Table 3, RF values). This strongly suggests the development of  stochastic `edge radius' approaches (analogous to the classical non-stochastic concept considered also by Atteson \cite{att}) which would account for the length of the branch being considered, and thus will not use the worst-case approach used here in several places, where all branches have the same length.  In other words,  the tree $T$ may have some very short edges, however, provided a given edge is not too short, then we may be able to recover the corresponding split with high accuracy, even if the entire tree cannot be reconstructed. Finer algorithm analyses should follow from such a framework.

Our study also suggests two further theoretical questions that would be worth investigating in future. 
 Firstly, it would be of interest to analytically calculate the precise limiting stochastic safety radius of  NJ and other standard methods; in particular, to determine if it takes the value 1 or some number less than this.

It would also be of interest to study the stochastic safety radius of distance-based tree reconstruction methods for the more general class of models in which 
the $\epsilon_{xy}$ values have  a multivariate normal distribution, with means of $0$ and  a covariance matrix $\Sigma$
(the OLS model we considered in this paper assumes that $\Sigma$ is the diagonal matrix with each diagonal entry equal to $\sigma^2$).  In such models, the variance of $\epsilon_{xy}$  would typically increase 
with the path length between leaves $x$ and $y$ in the tree (the weighted least squares (WLS) assumption \cite{fit}), while the covariance for two pairs of taxa would typically  increase with the total length of the shared branches that are present in both paths connecting each pair (the generalized least-squares (GLS) assumption \cite{bul}).

\section{Acknowledgments} 
MS thanks the Allan Wilson Centre and the NZ Marsden Fund for supporting this work.

\section{Appendix: Proof of (\ref{helps})}

Substituting  $t=x+u, u \geq 0$ in  $\PP(N(0,1)>x) = \int_x^\infty \frac{1}{\sqrt{2\pi}} e^{-t^2/2} dt$ gives:
$$\PP(N(0,1)>x) = e^{-x^2/2}\int_0^\infty \frac{1}{\sqrt{2\pi}} e^{-xu}e^{-u^2/2} du < e^{-x^2/2}\int_0^\infty \frac{1}{\sqrt{2\pi}} e^{-u^2/2} du,$$
where the  second inequality is from $e^{-xu}< 1$ for all $x,u>0$. Since the last term on the right is $\frac{1}{2}$, we get the inequality in (\ref{helps}).
Turning to the asymptotic relationship, consider:
\begin{equation}
\label{limeq}
\lim_{x \rightarrow \infty} \frac{\frac{1}{\sqrt{2\pi}}\int_x^\infty e^{-t^2/2} dt}{\frac{1}{x\sqrt{2\pi}} e^{-x^2/2}}.
\end{equation}
Since the numerator and denominator limits are both zero, we can apply L'H\^opital's rule. Straightforward calculus (using the fundamental theorem of calculus for the numerator) establishes
that the limit in (\ref{limeq}) equals 1.
\hfill$\Box$

\end{document}